\documentclass[11pt]{article}

\usepackage[a4paper,margin=1in]{geometry}
\usepackage{amsmath,amssymb,amsfonts,amsthm,mathtools}
\usepackage[ruled,vlined,linesnumbered]{algorithm2e}
\usepackage{booktabs}
\usepackage{multirow}
\usepackage{graphicx}
\usepackage{xcolor}
\usepackage[hidelinks,hypertexnames=false]{hyperref}
\usepackage{authblk}
\usepackage{enumitem}
\usepackage{tabularx}
\usepackage{array}
\usepackage{ragged2e}
\usepackage{placeins}
\usepackage{float}

\newtheorem{proposition}{Proposition}

\newcommand{\out}{\mathrm{out}}
\newcommand{\inn}{\mathrm{in}}

\title{
Sender--Receiver Community Detection in Directed Networks\\
via Node-Role-Constrained Edge Clustering
}

\author{
Duy Hieu DO\\
Institute of Mathematics, Vietnam Academy of Science and Technology
}
\date{}

\begin{document}
\maketitle

\begin{center}
\texttt{ddhieu@math.ac.vn}
\end{center}

\begin{abstract}
Directed community detection is challenging because edge directions encode
asymmetric source--target relations. Most directed modularity and
random-walk methods assign one label to each vertex, whereas recent
bimodularity-based methods cluster directed edges more freely. We propose
TT-SR, a Two-Tier Sender--Receiver framework that lies between these two
viewpoints. Each vertex is assigned a sender role and a receiver role, and
each directed edge receives the type induced by the sender role of its
source and the receiver role of its target. Thus, TT-SR is more expressive
than one-label vertex clustering while remaining more interpretable than
unrestricted edge clustering. The method generates candidate
sender--receiver assignments from count-residual, stationary-flow,
degree-corrected, and order-score views. The candidates are refined by local
role updates and selected by a two-tier rule: a degree-corrected profile
score provides the primary structural criterion, while Bernoulli density and
order-flow scores are used only as secondary ranking signals. We justify the
main spectral views through sender--receiver modularity relaxations and
interpret the degree-corrected score as a likelihood-based residual
comparison. Experiments on pathway-type, co-block, and ordered-flow
synthetic benchmarks show that TT-SR achieves the strongest or essentially
tied strongest edge-community recovery across three scale settings. The
gains are most pronounced on degree-corrected co-block and ordered-flow
graphs. Real-network diagnostics further indicate that TT-SR aligns well
with Email-Eu-core metadata and extracts strong sender--receiver
bicommunity summaries on unlabeled directed networks.
\end{abstract}

\noindent\textbf{Keywords:} directed networks; community detection; edge clustering; sender--receiver roles; bimodularity; degree correction; stochastic block model; spectral initialization; random walks.

\section{Introduction}

Community detection is one of the central problems in network science. 
In undirected graphs, a community is commonly understood as a set of vertices with more internal edges than expected under a suitable null model. 
This idea has led to a large family of methods, including edge-betweenness and modularity-based algorithms \cite{GirvanNewman2002,NewmanGirvan2004,Newman2006Modularity}, multilevel modularity optimization such as Louvain and Leiden \cite{Blondel2008Louvain,Traag2019Leiden}, spectral clustering \cite{NgJordanWeiss2002,VonLuxburg2007}, random-walk methods \cite{PonsLatapy2005,RosvallBergstrom2008,RosvallAxelssonBergstrom2009,Delvenne2010}, and stochastic-block-model approaches \cite{Holland1983SBM,KarrerNewman2011}. 
Comprehensive surveys can be found in \cite{Fortunato2010,FortunatoHric2016}.

Directed networks require a more careful formulation. 
A directed edge \(i\to j\) does not merely indicate that \(i\) and \(j\) are similar or connected. 
It indicates that \(i\) acts as a source and \(j\) acts as a target. 
This distinction is essential in many applications. 
In citation networks, a paper may cite one research area while being cited by another. 
In trade networks, a country may export to one group of countries while importing from another. 
In communication, information-flow, and neuronal networks, one group may primarily transmit information while another primarily receives it. 
Thus, directed community structure is often a structure of flows from sender roles to receiver roles, not simply a partition of vertices.

A standard way to extend modularity to directed graphs is to compare the observed edge \(A_{ij}\) with the directed configuration-model expectation
\[
\frac{d_i^{\out}d_j^{\inn}}{m},
\]
where \(d_i^{\out}\) is the out-degree of vertex \(i\), \(d_j^{\inn}\) is the in-degree of vertex \(j\), and \(m\) is the total edge weight. 
Leicht and Newman introduced a directed modularity based on this out-degree/in-degree null model \cite{LeichtNewman2008}. 
This correction is essential because directed networks may have very different sending and receiving activity patterns. 
However, many classical directed modularity formulations, including the Leicht--Newman formulation, still assign one label \(g_i\) to each vertex. 
Such methods therefore do not directly model the possibility that vertex \(i\) may belong to one group as a sender and another group as a receiver.

Recently, Cionca, Chan, and Van De Ville revisited directed community detection through the notion of \emph{bimodularity} \cite{Cionca2025Bimodularity}. 
Their approach distinguishes sending and receiving communities, constructs the directed modularity matrix
\[
B_A
=
A-\frac{d^{\out}(d^{\inn})^\top}{m},
\]
uses its singular value decomposition, and clusters edge features formed from source-side and target-side singular-vector coordinates. 
This is an important conceptual step because it treats directed communities as source--target pathways rather than as ordinary undirected-like vertex groups.

The present paper is closest in spirit to this line of work, but it imposes a different structural constraint. 
Instead of clustering edges as independent objects in an edge-feature space, we require every edge label to be induced by two reusable node roles: the sender role of its source and the receiver role of its target. 
Thus, an edge \(i\to j\) is assigned the type
\[
(z_i^{\out},z_j^{\inn}).
\]
This node-role-constrained formulation is more expressive than one-label vertex clustering, because a vertex may send and receive according to different roles. 
At the same time, it is more interpretable than unrestricted edge clustering, because the edge communities are summarized by a compact sender--receiver role system at the vertex level.

The theoretical starting point of our framework is the random-walk modularity for directed graphs introduced in \cite{DoPhanDang2023}. 
Let \(P\) be an ergodic transition matrix and let \(\phi\) be its stationary distribution. 
In source--target notation, the stationary flow from \(u\) to \(v\) is \(\phi_uP_{uv}\), while the independent stationary-flow null model is \(\phi_u\phi_v\). 
For a one-label partition \(C_1,\ldots,C_K\), the stationary-distribution modularity is
\[
Q_{\rm RW}
=
\sum_{u,v}
\left(\phi_uP_{uv}-\phi_u\phi_v\right)
\delta_{C_uC_v}.
\]
The residual term is already directed, but the compatibility condition \(\delta_{C_uC_v}\) is still one-role: the source and target vertices are compared through the same community label.

TT-SR keeps the directed residual but changes the role model. 
Each vertex \(i\) receives a sender role \(z_i^{\out}\) and a receiver role \(z_i^{\inn}\). 
More generally, we introduce a sender--receiver compatibility matrix \(\Omega\) and define
\[
Q_{\rm SR}(\Omega)
=
\sum_{u,v}
\left(\phi_uP_{uv}-\phi_u\phi_v\right)
\Omega_{z_u^{\out},z_v^{\inn}}.
\]
The diagonal choice \(\Omega_{ab}=\mathbf 1\{a=b\}\) gives the direct two-role analogue
\[
Q_{\rm SR}^{\rm diag}
=
\sum_{u,v}
\left(\phi_uP_{uv}-\phi_u\phi_v\right)
\delta_{z_u^{\out},z_v^{\inn}}.
\]
The full edge-community representation is then obtained by keeping the ordered role pair
\[
i\to j
\quad\mapsto\quad
(z_i^{\out},z_j^{\inn}).
\]
Thus, the conceptual bridge is not merely from \(Q_{\rm RW}\) to another modularity score, but from one-label random-walk modularity to node-role-constrained directed edge communities.

This role-pair representation places the proposed model between one-label vertex clustering and unrestricted edge clustering. 
If one imposes
\[
z_i^{\out}=z_i^{\inn}
\quad\text{for all } i,
\]
then the model reduces to a one-label-per-node partition. 
If the two roles are allowed to differ, the model can represent co-block structures, source--target pathways, cyclic block flows, ordered directed flows, and bipartite-like directed structures. 
At the same time, the model remains more structured than unrestricted edge clustering because all edge labels are induced by two node-level role assignments.

The main algorithmic challenge is that directed networks may reveal their structure through different signals. 
Some networks are dominated by raw density. 
Some require degree correction because high-activity vertices dominate the edge counts. 
Some are better described by stationary random-walk flow. 
Some contain a global ordering in which edges tend to point from earlier groups to later groups. 
A less robust strategy would be to first identify the type of the input graph and then choose a specialized algorithm. 
Our strategy is different: we use one sender--receiver model, generate many candidate partitions from multiple generic directed views, refine them by common objectives, and select the final solution by a two-tier internal criterion.

Thus, the main methodological distinction of TT-SR is its role-constrained view of directed edge communities: edge labels are not free clusters, but are induced by reusable sender and receiver roles.

This paper also continues our previous line of work on random-walk and flow-based community detection, including random-walk distances, Louvain-type refinements, dynamic community detection, overlapping communities, and stationary-distribution modularity for directed graphs \cite{DoPhan2022RIVF,DoPhan2025JOCO,DoNguyenPhan2025RIVF,DoPhan2025ACS,DoPhanDang2023}. 
In contrast to these earlier one-label or vertex-centered settings, the present work moves the random-walk viewpoint to a sender--receiver edge-community setting.

The contributions of this paper are as follows.

\begin{itemize}[leftmargin=2em]
    \item We introduce a node-role-constrained formulation of directed edge community detection. 
    Each vertex is assigned a sender role and a receiver role, and every directed edge \(i\to j\) receives the induced type
    \[
    (z_i^{\out},z_j^{\inn}).
    \]
    This provides a middle ground between one-label vertex clustering and unrestricted edge clustering.

    \item We build on the stationary-distribution modularity \(Q_{\rm RW}\) for directed graphs proposed in \cite{DoPhanDang2023} and derive a sender--receiver modularity family
    \[
    Q_{\rm SR}(\Omega)
    =
    \sum_{u,v}
    \left(\phi_uP_{uv}-\phi_u\phi_v\right)
    \Omega_{z_u^{\out},z_v^{\inn}}.
    \]
    The diagonal case recovers a direct two-role analogue of \(Q_{\rm RW}\), while the full role-pair representation supports off-diagonal directed pathways.

\item We show that the sender and receiver spectral coordinates used by TT-SR arise from continuous relaxations of sender--receiver modularity objectives. 
For a directed residual matrix \(B\) and spectral dimension \(r\), the relaxed problem
\[
\max_{X^\top X=I_r,\;Y^\top Y=I_r}\operatorname{tr}(X^\top B Y)
\]
is solved by the leading \(r\) left and right singular vectors of \(B\). 
Thus, the source-side and target-side coordinates are not ad hoc embeddings, but relaxed optimizers of a directed sender--receiver modularity objective.

\item We introduce a multi-view initialization strategy based on count residuals, stationary-flow residuals, degree-corrected residuals, and order-score segmentations. 
The count-residual and stationary-flow views follow from ordinary sender--receiver spectral relaxations, while the degree-corrected view follows from a degree-weighted relaxation that normalizes outgoing and incoming activity separately.

\item We introduce a degree-corrected sender--receiver profile score whose unsmoothed form is an aggregated Poisson likelihood-ratio statistic under a directed configuration null model. 
We further relate this profile score to sender--receiver residual modularity by showing that its local quadratic approximation is a degree-normalized residual energy, which is, up to a constant factor, the square of the strongest normalized linear sender--receiver residual contrast. 
Thus, the degree-corrected score is not an unrelated selection heuristic, but a likelihood-based nonlinear counterpart of the residual modularity principle.

\item We propose local refinement procedures for sender and receiver roles under degree-corrected and Bernoulli co-block objectives.

\item We formulate the two-tier selection rule as a constrained
candidate-selection principle. 
The degree-corrected profile score first defines a primary support set of
structurally reliable candidates. 
Inside this support set, a rank-based secondary score combines
degree-corrected, Bernoulli density, and order-flow ranks. 
Thus, auxiliary criteria are used only as secondary evidence among
degree-corrected-eligible candidates, rather than as unconstrained
competing objectives.
    \item We design a three-scale synthetic experimental protocol covering pathway-type directed communities, sender--receiver co-block communities, and ordered-flow directed communities. 
    The experiments show that TT-SR remains accurate as both the graph size and the number of planted groups increase.
\end{itemize}

\section{Background and Positioning}

\subsection{Directed modularity, flow modularity, and degree correction}

Modularity compares observed edge density with a suitable null model
\cite{NewmanGirvan2004,Newman2006Modularity}. In directed graphs, the null
model must preserve outgoing and incoming activity separately. Let
\[
d_i^{\mathrm{out}}=\sum_j A_{ij},
\qquad
d_j^{\mathrm{in}}=\sum_i A_{ij},
\qquad
m=\sum_{i,j}A_{ij}.
\]
The directed configuration-model expectation is
\[
\mathbb E_0[A_{ij}]
=
\frac{d_i^{\mathrm{out}}d_j^{\mathrm{in}}}{m},
\]
which gives the directed modularity matrix
\[
B_A
=
A-\frac{d^{\mathrm{out}}(d^{\mathrm{in}})^\top}{m}.
\]
The corresponding one-label directed modularity objective is
\[
Q_{\mathrm{dir}}
=
\frac{1}{m}
\sum_{i,j}
\left(
A_{ij}
-
\frac{d_i^{\mathrm{out}}d_j^{\mathrm{in}}}{m}
\right)
\mathbf 1\{g_i=g_j\}.
\]
Although this objective uses a directed null model, it still assigns a
single label \(g_i\) to each vertex. It therefore cannot directly represent
vertices that play different roles as senders and receivers.

A related directed modularity is based on stationary random-walk flow
\cite{DoPhanDang2023}. If \(P\) is an ergodic transition matrix and
\(\phi\) is its stationary distribution, then the observed stationary flow
from \(u\) to \(v\) is \(\phi_uP_{uv}\), while the independent-flow null
model is \(\phi_u\phi_v\). For a one-label partition, this gives
\[
Q_{\rm RW}
=
\sum_{u,v}
\left(
\phi_uP_{uv}-\phi_u\phi_v
\right)
\delta_{C_uC_v}.
\]
This preserves edge direction, but it still compares the source and target
vertices through the same community label. The present paper keeps the
directed residual viewpoint but changes the role model by assigning each
vertex a sender role and a receiver role.

Degree correction is also essential. The degree-corrected stochastic block
model shows that ignoring heterogeneous node activity can distort community
detection \cite{KarrerNewman2011}. In directed networks, heterogeneity
appears separately in outgoing and incoming activity. This motivates the
degree-corrected sender--receiver residuals and profile score used in TT-SR.

\subsection{Relation to bimodularity, co-clustering, and flow methods}

Bimodularity-based directed community detection takes an explicitly
source--target viewpoint. Cionca, Chan, and Van De Ville define directed
communities as mappings from sending communities to receiving communities
and use the singular value decomposition of the directed modularity matrix
\[
B_A=U\Sigma V^\top
\]
to construct source-side and target-side edge features
\cite{Cionca2025Bimodularity}. These edge features are then clustered
directly, producing unrestricted edge communities.

TT-SR is closest in spirit to this bimodularity viewpoint, but it imposes a
node-role constraint on the final representation. Instead of clustering
edges as independent objects, TT-SR assigns two reusable roles to each
vertex and induces the edge label by
\[
i\to j
\quad\mapsto\quad
(z_i^{\mathrm{out}},z_j^{\mathrm{in}}).
\]
Thus, TT-SR lies between one-label vertex clustering and unrestricted edge
clustering: it is more expressive than assigning one label to each vertex,
but more interpretable than free edge clustering.

The distinction between source-side and target-side structure also appears
in directed co-clustering and spectral methods. DiSIM detects asymmetric
co-clusters under the directed stochastic co-blockmodel
\cite{RoheQinYu2016}, while D-SCORE-type methods use degree-corrected
spectral coordinates for directed community detection
\cite{WangLiangJi2020}. Random-walk methods provide another related
viewpoint, including Walktrap, Infomap, Markov stability, and
stationary-distribution approaches
\cite{PonsLatapy2005,RosvallBergstrom2008,RosvallAxelssonBergstrom2009,Delvenne2010,DoPhanDang2023}.
In TT-SR, these ideas are not used as separate competing models; rather,
count residuals, stationary-flow residuals, degree-corrected residuals, and
order scores are used as complementary candidate-generating views within a
single sender--receiver framework.

\section{Sender--Receiver Theory and Scores}
\label{sec:sr-theory}

\subsection{Sender--receiver modularity and edge types}

Let \(G=(V,E)\) be a directed graph with adjacency matrix
\(A\in\mathbb R^{n\times n}\), where \(A_{ij}\ge0\) is the weight of the
directed edge \(i\to j\). Let \(P\) be an ergodic row-stochastic transition
matrix and let \(\phi\) be its stationary distribution,
\[
\phi^\top P=\phi^\top,
\qquad
\sum_i\phi_i=1.
\]
The stationary flow from \(u\) to \(v\) is \(\phi_uP_{uv}\), while the
independent stationary-flow null model is \(\phi_u\phi_v\). Thus the
random-walk residual matrix is
\[
B^{\rm rw}
=
\Phi P-\phi\phi^\top,
\qquad
\Phi=\operatorname{diag}(\phi).
\]
The stationary-distribution modularity of Dang, Do, and Phan
\cite{DoPhanDang2023} can be written as
\[
Q_{\rm RW}
=
\sum_{u,v}
\left(
\phi_uP_{uv}-\phi_u\phi_v
\right)
\delta_{C_uC_v}
=
\operatorname{tr}(S^\top B^{\rm rw}S),
\]
where \(S\) is the membership matrix of the one-label partition. This score
preserves edge direction, but it still compares the source and target
vertices through the same community label.

To model asymmetric source--target behavior, TT-SR assigns two labels to
each vertex:
\[
z_i^{\mathrm{out}}\in\{1,\ldots,K_{\mathrm{out}}\},
\qquad
z_i^{\mathrm{in}}\in\{1,\ldots,K_{\mathrm{in}}\}.
\]
The label \(z_i^{\mathrm{out}}\) describes how vertex \(i\) sends flow, while
\(z_i^{\mathrm{in}}\) describes how it receives flow. Let
\(S^{\mathrm{out}}\) and \(S^{\mathrm{in}}\) be the corresponding sender and
receiver membership matrices. The aggregated sender--receiver random-walk
residual is
\[
H^{\rm rw}
=
(S^{\mathrm{out}})^\top B^{\rm rw}S^{\mathrm{in}}.
\]
For a compatibility matrix
\(\Omega\in\mathbb R^{K_{\mathrm{out}}\times K_{\mathrm{in}}}\), define
\[
Q_{\rm SR}(\Omega)
=
\left\langle
(S^{\mathrm{out}})^\top B^{\rm rw}S^{\mathrm{in}},
\Omega
\right\rangle
=
\sum_{u,v}
B^{\rm rw}_{uv}
\Omega_{z_u^{\mathrm{out}},z_v^{\mathrm{in}}}.
\]
The diagonal choice \(K_{\mathrm{out}}=K_{\mathrm{in}}=K\) and
\(\Omega=I_K\) gives
\[
Q_{\rm SR}^{\rm diag}
=
\sum_{u,v}
\left(
\phi_uP_{uv}-\phi_u\phi_v
\right)
\delta_{z_u^{\mathrm{out}},z_v^{\mathrm{in}}}
=
\operatorname{tr}\left(
(S^{\mathrm{out}})^\top B^{\rm rw}S^{\mathrm{in}}
\right).
\]
If \(z_i^{\mathrm{out}}=z_i^{\mathrm{in}}=C_i\) for all vertices, then
\(Q_{\rm SR}^{\rm diag}\) reduces to the one-label modularity \(Q_{\rm RW}\).
Thus the sender--receiver model contains the usual one-label model as a
special case.

For directed edge community detection, however, the diagonal compatibility
is still too restrictive because meaningful directed pathways may be
off-diagonal, such as \(1\to2\), \(2\to3\), or \(3\to1\). TT-SR therefore
keeps the full ordered role pair as the edge type:
\[
y_{ij}
=
(z_i^{\mathrm{out}},z_j^{\mathrm{in}}).
\]
If a single integer label is required, we encode it as
\[
\widehat y_{ij}
=
K_{\mathrm{in}}(z_i^{\mathrm{out}}-1)+z_j^{\mathrm{in}}.
\]
This role-pair representation is more expressive than one-label vertex
clustering, while remaining more interpretable than unrestricted edge
clustering because all edge labels are induced by reusable node roles.

\subsection{Spectral relaxation}

The sender--receiver modularity objective is discrete because the membership
matrices are binary. We now show that the left and right singular-vector
coordinates used by TT-SR arise from a continuous relaxation.

Let \(B\in\mathbb R^{n\times n}\) be a centered directed residual matrix.
For example,
\[
B=B^{\rm rw}=\Phi P-\phi\phi^\top
\]
gives the stationary-flow view, while
\[
B=B_A=A-\frac{d^{\mathrm{out}}(d^{\mathrm{in}})^\top}{m}
\]
gives the count-residual view. Let \(S^{\mathrm{out}}\) and
\(S^{\mathrm{in}}\) be sender and receiver membership matrices with
nonempty groups, and define the normalized matrices
\[
F^{\mathrm{out}}
=
S^{\mathrm{out}}
\left((S^{\mathrm{out}})^\top S^{\mathrm{out}}\right)^{-1/2},
\qquad
F^{\mathrm{in}}
=
S^{\mathrm{in}}
\left((S^{\mathrm{in}})^\top S^{\mathrm{in}}\right)^{-1/2}.
\]
Then
\[
(F^{\mathrm{out}})^\top F^{\mathrm{out}}=I,
\qquad
(F^{\mathrm{in}})^\top F^{\mathrm{in}}=I.
\]
A normalized sender--receiver modularity relaxation is
\[
\mathcal Q_B
=
\operatorname{tr}
\left(
(F^{\mathrm{out}})^\top B F^{\mathrm{in}}
\right).
\]
Relaxing \(F^{\mathrm{out}}\) and \(F^{\mathrm{in}}\) to arbitrary
orthonormal matrices \(X,Y\in\mathbb R^{n\times r}\) gives
\[
\max_{X^\top X=I_r,\;Y^\top Y=I_r}
\operatorname{tr}(X^\top B Y).
\]

\begin{proposition}[SVD relaxation of sender--receiver modularity]
\label{prop:svd_relaxation}
Let \(B=U\Sigma V^\top\) be a singular value decomposition of \(B\), with
singular values
\[
\sigma_1\ge\sigma_2\ge\cdots\ge\sigma_n\ge0.
\]
Then, for any \(r\le n\),
\[
\max_{X^\top X=I_r,\;Y^\top Y=I_r}
\operatorname{tr}(X^\top B Y)
=
\sum_{\ell=1}^r \sigma_\ell.
\]
One maximizer is
\[
X^\star=U_r,
\qquad
Y^\star=V_r,
\]
where \(U_r\) and \(V_r\) contain the first \(r\) left and right singular
vectors of \(B\).
\end{proposition}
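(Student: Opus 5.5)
We prove the statement in two halves: achievability by plugging in the leading singular vectors, and an upper bound valid for every feasible pair \((X,Y)\). The achievability half is a direct computation. Write \(B=\sum_{\ell=1}^n\sigma_\ell u_\ell v_\ell^\top\). Then
\[
U_r^\top B V_r=\operatorname{diag}(\sigma_1,\ldots,\sigma_r),
\]
by orthonormality of the columns of \(U\) and of \(V\). Hence \(\operatorname{tr}(U_r^\top BV_r)=\sum_{\ell\le r}\sigma_\ell\). Since \(U_r^\top U_r=V_r^\top V_r=I_r\), the pair \((U_r,V_r)\) is feasible. So the maximum is at least \(\sum_{\ell\le r}\sigma_\ell\), and \((U_r,V_r)\) is a maximizer once the upper bound is established.

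For the upper bound, fix feasible \(X,Y\) and expand
\[
\operatorname{tr}(X^\top BY)=\sum_{\ell=1}^n\sigma_\ell\,(v_\ell^\top Y)(X^\top u_\ell)
=\sum_{\ell=1}^n\sigma_\ell\,c_\ell,
\qquad c_\ell:=\langle X^\top u_\ell,\;Y^\top v_\ell\rangle .
\]
We bound the coefficients \(c_\ell\) in two ways.

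First, by Cauchy--Schwarz and then AM--GM,
\[
|c_\ell|\le \|X^\top u_\ell\|\,\|Y^\top v_\ell\|\le \tfrac12\bigl(\|X^\top u_\ell\|^2+\|Y^\top v_\ell\|^2\bigr).
\]
Set \(a_\ell=\|X^\top u_\ell\|^2\) and \(b_\ell=\|Y^\top v_\ell\|^2\). Because \(XX^\top\) is an orthogonal projection and \(u_\ell\) is a unit vector, \(a_\ell\in[0,1]\). Because \(\{u_\ell\}\) is an orthonormal basis, \(\sum_\ell a_\ell=\operatorname{tr}(X^\top UU^\top X)=\operatorname{tr}(X^\top X)=r\). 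The same holds for \(b_\ell\). Therefore \(w_\ell:=\tfrac12(a_\ell+b_\ell)\) satisfies \(0\le w_\ell\le1\) and \(\sum_\ell w_\ell=r\).

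Second, we combine these constraints into the bound. We have
\[
\operatorname{tr}(X^\top BY)\le\sum_\ell\sigma_\ell|c_\ell|\le\sum_\ell\sigma_\ell w_\ell .
\]
Maximizing the linear function \(\sum_\ell\sigma_\ell w_\ell\) over the polytope \(\{0\le w_\ell\le1,\ \sum_\ell w_\ell=r\}\) is a fractional knapsack problem. Since the \(\sigma_\ell\) are nonnegative and sorted in decreasing order, the optimum puts weight \(1\) on the first \(r\) indices. Hence the value is at most \(\sum_{\ell\le r}\sigma_\ell\). This is the usual rearrangement or exchange argument: moving mass from a larger index to a smaller one never decreases the sum.

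The main obstacle is this upper bound, in particular choosing the per-index quantities so that the two constraints \(w_\ell\le1\) and \(\sum_\ell w_\ell=r\) are both available. Decoupling \(X\) and \(Y\) via AM--GM is what makes this work. If that route were to fail, the fallback would be von Neumann's trace inequality: \(|\operatorname{tr}(BYX^\top)|\le\sum_\ell\sigma_\ell(B)\sigma_\ell(YX^\top)\). The matrix \(YX^\top\) has exactly \(r\) singular values equal to \(1\) and the rest \(0\), which gives the same bound at once.

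Finally, the case \(r\le n\) imposes no extra difficulty, and ties among singular values only affect uniqueness. This is why the statement says "one maximizer".
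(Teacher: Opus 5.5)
Your proof is correct, but it takes a more self-contained route than the paper. The paper proves the upper bound in one line by citing the Ky Fan maximum principle (equivalently, von Neumann's trace inequality), and it asserts without computation that $(U_r,V_r)$ attains the bound. Your stated fallback is exactly that argument.

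Your main argument instead derives the bound from first principles:
\begin{itemize}
\item you expand in the singular bases;
\item you decouple $X$ and $Y$ by Cauchy--Schwarz and AM--GM into weights $w_\ell=\tfrac12\bigl(\|X^\top u_\ell\|^2+\|Y^\top v_\ell\|^2\bigr)$ with $0\le w_\ell\le1$ and $\sum_\ell w_\ell=r$;
\item you finish with a rearrangement bound.
\end{itemize}
You describe that last step rather than prove it. The identity $\sum_\ell\sigma_\ell w_\ell-\sum_{\ell\le r}\sigma_\ell=\sum_{\ell\le r}(\sigma_\ell-\sigma_r)(w_\ell-1)+\sum_{\ell>r}(\sigma_\ell-\sigma_r)w_\ell\le0$ closes it at once.

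This is a standard elementary argument for the singular-value form of Ky Fan's principle. It works because the matrix paired with $B$ in $\operatorname{tr}(BYX^\top)$ is the partial isometry $YX^\top$, so you never need von Neumann's inequality for general matrix pairs.

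The comparison comes down to this. The paper's version is short but imports a nontrivial theorem. Yours is elementary, verifies attainment explicitly through $U_r^\top BV_r=\Sigma_r$, and makes the equality cases transparent. For example, when $r<n$ and $\sigma_r>\sigma_{r+1}$, equality forces $w_\ell=1$ for $\ell\le r$, $w_\ell=0$ otherwise, and $X^\top u_\ell=Y^\top v_\ell$ for $\ell\le r$. So the maximizers are exactly the pairs $(U_rQ,V_rQ)$ with $Q\in\mathbb R^{r\times r}$ orthogonal, which sharpens the paper's ``one maximizer'' statement.
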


\begin{proof}
By the Ky Fan maximum principle, or equivalently von Neumann's trace
inequality,
\[
\operatorname{tr}(X^\top B Y)
\le
\sum_{\ell=1}^r \sigma_\ell
\]
for all matrices \(X,Y\in\mathbb R^{n\times r}\) satisfying
\(X^\top X=I_r\) and \(Y^\top Y=I_r\). Equality is attained by choosing
\(X=U_r\) and \(Y=V_r\). Hence the leading left and right singular vectors
solve the relaxed sender--receiver modularity problem.
\end{proof}

This proposition justifies the use of left singular vectors as sender-role
coordinates and right singular vectors as receiver-role coordinates. In
implementation, TT-SR uses singular-value weighted embeddings such as
\[
X^{\mathrm{out}}=U_r\Sigma_r,
\qquad
X^{\mathrm{in}}=V_r\Sigma_r,
\]
and then clusters their rows into sender and receiver roles.

The same relaxation also explains the degree-corrected residual view. Let
\[
W_{\mathrm{out}}
=
\operatorname{diag}(d^{\mathrm{out}}+\tau_{\mathrm{out}}),
\qquad
W_{\mathrm{in}}
=
\operatorname{diag}(d^{\mathrm{in}}+\tau_{\mathrm{in}}),
\]
where \(\tau_{\mathrm{out}},\tau_{\mathrm{in}}>0\). The degree-weighted
relaxation
\[
\max
\operatorname{tr}(X^\top B_A Y)
\quad
\text{subject to}
\quad
X^\top W_{\mathrm{out}}X=I_r,
\qquad
Y^\top W_{\mathrm{in}}Y=I_r
\]
is transformed, by
\[
\widetilde X=W_{\mathrm{out}}^{1/2}X,
\qquad
\widetilde Y=W_{\mathrm{in}}^{1/2}Y,
\]
into an ordinary orthonormal SVD relaxation for
\[
R
=
W_{\mathrm{out}}^{-1/2}
B_A
W_{\mathrm{in}}^{-1/2}.
\]
Equivalently,
\[
R_{ij}
=
\frac{
A_{ij}-d_i^{\mathrm{out}}d_j^{\mathrm{in}}/m
}{
\sqrt{(d_i^{\mathrm{out}}+\tau_{\mathrm{out}})
(d_j^{\mathrm{in}}+\tau_{\mathrm{in}})}
}.
\]
Thus the count-residual, stationary-flow residual, and degree-corrected
residual views used by TT-SR share the same spectral sender--receiver
relaxation principle.

\subsection{Degree-corrected profile score}

The spectral residual views generate candidate sender and receiver roles.
For final selection, TT-SR uses a degree-corrected profile score. Given
sender labels \(z^{\mathrm{out}}\) and receiver labels \(z^{\mathrm{in}}\),
define the observed block mass
\[
M_{ab}
=
\sum_{i,j}
A_{ij}
\mathbf 1
\{z_i^{\mathrm{out}}=a,\ z_j^{\mathrm{in}}=b\}.
\]
Let
\[
\kappa_a^{\mathrm{out}}
=
\sum_{i:z_i^{\mathrm{out}}=a}d_i^{\mathrm{out}},
\qquad
\kappa_b^{\mathrm{in}}
=
\sum_{j:z_j^{\mathrm{in}}=b}d_j^{\mathrm{in}}.
\]
Under the directed degree-corrected null model, the expected edge mass from
sender group \(a\) to receiver group \(b\) is
\[
E_{ab}
=
\frac{\kappa_a^{\mathrm{out}}\kappa_b^{\mathrm{in}}}{m}.
\]
The degree-corrected sender--receiver profile score is
\[
S_{\mathrm{DC}}(z^{\mathrm{out}},z^{\mathrm{in}})
=
\sum_{a,b}
M_{ab}
\log
\frac{M_{ab}+\varepsilon}{E_{ab}+\varepsilon}.
\]
The implemented score always uses the smoothed form with \(\varepsilon>0\).
Whenever the unsmoothed expressions are used for interpretation, they are
understood on block pairs with \(E_{ab}>0\), together with the usual
zero-count convention \(0\log(0/E_{ab})=0\).

A block pair is rewarded only when its observed edge mass is large relative
to what would be expected from the total outgoing activity of its sender
group and the total incoming activity of its receiver group. This makes
\(S_{\mathrm{DC}}\) more robust to heterogeneous out-degrees and in-degrees
than raw density scores.

The unsmoothed version of \(S_{\mathrm{DC}}\) has a simple likelihood
interpretation. If the aggregated counts are compared with the directed
configuration null
\[
M_{ab}\sim \operatorname{Poisson}(E_{ab}),
\]
then the log-likelihood ratio between the saturated block model with mean
\(M_{ab}\) and the null model is
\[
\sum_{a,b}
\left[
M_{ab}\log\frac{M_{ab}}{E_{ab}}
-
M_{ab}
+
E_{ab}
\right].
\]
Since
\[
\sum_{a,b}M_{ab}
=
\sum_{a,b}E_{ab}
=
m,
\]
the linear terms cancel, leaving
\[
\sum_{a,b}
M_{ab}\log\frac{M_{ab}}{E_{ab}},
\]
which is the unsmoothed form of \(S_{\mathrm{DC}}\).

Locally, if \(D_{ab}=M_{ab}-E_{ab}\) is small relative to \(E_{ab}\), then
\(M_{ab}=E_{ab}+D_{ab}\) and a Taylor expansion gives
\[
M_{ab}\log\frac{M_{ab}}{E_{ab}}
=
D_{ab}
+
\frac12\frac{D_{ab}^2}{E_{ab}}
+
O\left(\frac{|D_{ab}|^3}{E_{ab}^2}\right).
\]
Since
\[
\sum_{a,b}D_{ab}
=
\sum_{a,b}M_{ab}
-
\sum_{a,b}E_{ab}
=
0,
\]
the linear terms cancel after summing over all block pairs. Hence
\[
\sum_{a,b}
M_{ab}\log\frac{M_{ab}}{E_{ab}}
=
\frac12
\sum_{a,b}
\frac{D_{ab}^2}{E_{ab}}
+
O\left(
\sum_{a,b}
\frac{|D_{ab}|^3}{E_{ab}^2}
\right).
\]
Thus \(S_{\mathrm{DC}}\) can be viewed as a likelihood-based nonlinear
counterpart of the residual modularity principle: it profiles out the
block-pair intensities and measures degree-normalized sender--receiver
deviation from the directed configuration null.

This quadratic residual energy also has a simple linear-contrast
interpretation. For the normalized class
\[
\mathcal A
=
\left\{
\Omega:
\sum_{a,b}E_{ab}\Omega_{ab}^{2}\le 1
\right\},
\]
the Cauchy--Schwarz inequality gives
\[
\sum_{a,b}D_{ab}\Omega_{ab}
\le
\left(
\sum_{a,b}\frac{D_{ab}^{2}}{E_{ab}}
\right)^{1/2}.
\]
Equality is attained, when \(D\not\equiv 0\), by taking
\[
\Omega_{ab}
\propto
\frac{D_{ab}}{E_{ab}}.
\]
Consequently, the local quadratic approximation of
\(S_{\mathrm{DC}}\) is, up to the factor \(1/2\), the square of the
strongest normalized sender--receiver linear residual contrast.

In the synthetic experiments, the number of sender and receiver roles is
fixed to the planted value \(K\). In the real-network experiments, we report
diagnostics over prescribed \(K\)-grids. If candidates with different role
budgets are compared in a fully unsupervised setting, \(S_{\mathrm{DC}}\)
should be combined with a complexity penalty such as BIC, MDL, or
held-out likelihood \cite{Schwarz1978,Rissanen1978}.

\subsection{Auxiliary Bernoulli and order-flow scores}

The degree-corrected score is the primary structural criterion. TT-SR also
uses two auxiliary scores as secondary evidence during candidate selection.

For binary graphs, or for the binarized support of a weighted graph, define
\[
A^{(0)}_{ij}=\mathbf 1\{A_{ij}>0\}.
\]
Let
\[
M^{(0)}_{ab}
=
\sum_{i,j}
A^{(0)}_{ij}
\mathbf 1\{z_i^{\mathrm{out}}=a,\ z_j^{\mathrm{in}}=b\}.
\]
Let
\[
n_a^{\mathrm{out}}
=
|\{i:z_i^{\mathrm{out}}=a\}|,
\qquad
n_b^{\mathrm{in}}
=
|\{j:z_j^{\mathrm{in}}=b\}|.
\]
If self-loops are excluded, the number of possible directed pairs from
sender group \(a\) to receiver group \(b\) is
\[
N_{ab}
=
n_a^{\mathrm{out}}n_b^{\mathrm{in}}
-
|\{i:z_i^{\mathrm{out}}=a,\ z_i^{\mathrm{in}}=b\}|.
\]
With smoothing \(\alpha_{\rm B}>0\), define
\[
\widehat p_{ab}
=
\frac{M^{(0)}_{ab}+\alpha_{\rm B}}{N_{ab}+2\alpha_{\rm B}}.
\]
The Bernoulli score is
\[
S_{\rm B}
=
\sum_{a,b}
\left[
M^{(0)}_{ab}\log \widehat p_{ab}
+
(N_{ab}-M^{(0)}_{ab})\log(1-\widehat p_{ab})
\right].
\]
This score is useful for density-dominated co-block patterns, but it is
used only as a secondary score because it can be affected by degree
heterogeneity.

The order-flow score is designed for graphs with an approximate global
direction. Let \(s_i\) be a scalar order score for vertex \(i\), and let
\(\bar s_a^{\mathrm{out}}\) and \(\bar s_b^{\mathrm{in}}\) be the average
order scores of sender group \(a\) and receiver group \(b\). A block pair is
order-consistent if
\[
\bar s_b^{\mathrm{in}}
\ge
\bar s_a^{\mathrm{out}}-\delta,
\]
where \(\delta\ge0\) is a tolerance parameter. Let \(\mathcal F\) be the set
of order-consistent block pairs and let \(D_{ab}=M_{ab}-E_{ab}\). The
order-flow score is
\[
S_{\rm Ord}
=
\frac{1}{m}
\left[
\sum_{(a,b)\in\mathcal F} [D_{ab}]_+
-
\sum_{(a,b)\notin\mathcal F} [D_{ab}]_+
\right],
\qquad
[x]_+=\max\{x,0\}.
\]
It rewards positive degree-corrected excess flow that follows the learned
forward direction and penalizes positive excess flow that moves backward.

\section{The Proposed TT-SR Algorithm}
\label{sec:algorithm}

TT-SR, or \emph{Two-Tier Sender--Receiver}, detects directed edge
communities through reusable sender and receiver node roles. The algorithm
has three stages. It first generates candidate sender--receiver assignments
from several directed views, then refines each candidate by local role
updates, and finally selects the output through a degree-corrected gate
followed by secondary rank-based tie-breakers.

\subsection{Candidate generation}

TT-SR uses four complementary candidate-generating views. Each spectral view
produces a left embedding for sender roles and a right embedding for
receiver roles.

\paragraph{Count residual view.}
The first view uses the directed modularity residual
\[
B_A
=
A-\frac{d^{\mathrm{out}}(d^{\mathrm{in}})^\top}{m}.
\]
A truncated singular value decomposition
\[
B_A\approx U_r\Sigma_rV_r^\top
\]
gives sender and receiver embeddings
\[
X^{\mathrm{out}}=U_r\Sigma_r,
\qquad
X^{\mathrm{in}}=V_r\Sigma_r.
\]
Clustering the rows of these matrices gives an initial pair of sender and
receiver role assignments.

\paragraph{Stationary-flow residual view.}
Let \(P\) be a row-stochastic transition matrix derived from \(A\), with
dangling rows replaced by a uniform row. With teleportation parameter
\(\alpha_{\rm flow}\), define
\[
P_{\alpha_{\rm flow}}
=
(1-\alpha_{\rm flow})P
+
\alpha_{\rm flow}\frac{\mathbf 1\mathbf 1^\top}{n}.
\]
The teleportation term makes \(P_{\alpha_{\rm flow}}\) ergodic and ensures
that the stationary distribution is unique and numerically well defined.
Let \(\phi\) be the stationary distribution of \(P_{\alpha_{\rm flow}}\).
We then use \(\phi\) as a regularized source weight, while keeping the
original directed transition structure in the flow matrix
\[
F_{ij}
=
\phi_iP_{ij}.
\]
Thus this candidate view is a regularized stationary-flow view: teleportation
is used to stabilize the source weights, whereas the observed transition
matrix \(P\) is retained to preserve the original directionality of the
network. For \(\lambda\in[0,1]\), set
\[
G_\lambda
=
(1-\lambda)\frac{A}{m}
+
\lambda F.
\]
If \(r_i=\sum_j(G_\lambda)_{ij}\) and
\(c_j=\sum_i(G_\lambda)_{ij}\), the centered count-flow residual is
\[
B_\lambda
=
G_\lambda
-
\frac{rc^\top}{\sum_{i,j}(G_\lambda)_{ij}}.
\]
A truncated SVD of \(B_\lambda\) gives another pair of sender and receiver
embeddings.

\paragraph{Degree-corrected residual view.}
To reduce the influence of heterogeneous degrees, TT-SR also uses
\[
R_{ij}
=
\frac{
A_{ij}-d_i^{\mathrm{out}}d_j^{\mathrm{in}}/m
}{
\sqrt{
(d_i^{\mathrm{out}}+\tau_{\mathrm{out}})
(d_j^{\mathrm{in}}+\tau_{\mathrm{in}})
}
}.
\]
A truncated SVD of \(R\) gives degree-corrected sender and receiver
coordinates.

\paragraph{Order-score view.}
To capture global ordered flow, TT-SR computes a scalar score \(s_i\) by
solving
\[
\min_{s\in\mathbb R^n}
\sum_{i,j}W_{ij}(s_j-s_i-1)^2
+
\rho_{\rm ord}\sum_i s_i^2,
\]
where \(W\) is a nonnegative directed weight matrix such as \(A\) or
\(G_\lambda\). The normal equations have the form
\[
(L_W+\rho_{\rm ord}I)s=b_W,
\]
where
\[
L_W=
\operatorname{diag}\bigl((W+W^\top)\mathbf 1\bigr)
-
(W+W^\top),
\qquad
(b_W)_i=\sum_jW_{ji}-\sum_jW_{ij}.
\]
Order-based candidates are obtained by quantile partitioning,
one-dimensional \(K\)-means on \(s\), and segmentation of the sorted scores.

Combining all views gives a set of initial candidates
\[
\mathcal Z^{(0)}
=
\left\{
(z_r^{\mathrm{out},0},z_r^{\mathrm{in},0})
:
r=1,\ldots,R
\right\}.
\]

\subsection{Local refinement}

Each initial candidate is refined by local sender and receiver updates. The
main refinement optimizes the degree-corrected profile score
\(S_{\mathrm{DC}}\). Suppose the receiver labels are fixed. For a vertex
\(i\), define its outgoing receiver-block vector
\[
h_i^{\mathrm{out}}(b)
=
\sum_j A_{ij}\mathbf 1\{z_j^{\mathrm{in}}=b\}.
\]
If \(i\) moves from sender group \(a\) to \(a'\), only rows \(a\) and \(a'\)
of the block-count matrix \(M\), and the corresponding out-degree totals,
are changed. The local gain is therefore computed by comparing the affected
rows of \(S_{\mathrm{DC}}\) before and after the move. The best positive
gain move is accepted, subject to a minimum group-size constraint. Receiver
updates are analogous and affect only the corresponding columns of \(M\).

TT-SR also generates Bernoulli-refined candidates using the binarized
support \(A^{(0)}_{ij}=\mathbf 1\{A_{ij}>0\}\). In this refinement, a vertex
is assigned to the sender or receiver role that maximizes its Bernoulli
block-likelihood contribution under the current block probabilities
\(\widehat p_{ab}\). These candidates are useful for density-dominated
co-block patterns but are treated conservatively in the final selection
step.

Since the number of sender and receiver assignments is finite, and the
degree-corrected refinement accepts only strictly positive-gain moves, the
refinement terminates at a coordinate-wise local optimum.

\subsection{Two-tier candidate selection}

After refinement, let
\[
\mathcal C=\{c_1,\ldots,c_L\}
\]
be the set of candidates. For each candidate \(c\), TT-SR computes
\[
S_{\mathrm{DC}}(c),
\qquad
S_{\mathrm{B}}(c),
\qquad
S_{\mathrm{Ord}}(c),
\]
and their percentile ranks
\[
r_{\mathrm{DC}}(c),
\qquad
r_{\mathrm{B}}(c),
\qquad
r_{\mathrm{Ord}}(c).
\]
It also computes a robust min--max normalized degree-corrected score
\[
\nu_{\mathrm{DC}}(c)
=
\left[
\frac{
S_{\mathrm{DC}}(c)-q_{0.05}
}{
q_{0.95}-q_{0.05}+\varepsilon
}
\right]_{[0,1]},
\]
where \(q_{0.05}\) and \(q_{0.95}\) are the empirical \(5\%\) and \(95\%\)
quantiles of \(S_{\mathrm{DC}}\) over \(\mathcal C\).

The first tier forms a degree-corrected eligible set
\[
\mathcal C_{\rm gate}
=
\left\{
c\in\mathcal C:
r_{\mathrm{DC}}(c)\ge \rho_{\rm gate}
\quad\text{or}\quad
\nu_{\mathrm{DC}}(c)\ge \gamma_{\rm gate}
\right\}.
\]
Thus, the final output must have strong support under the
degree-corrected sender--receiver criterion. The Bernoulli and order-flow
scores can only influence the choice inside this eligible set.

The second tier selects
\[
c^*
=
\arg\max_{c\in\mathcal C_{\rm gate}}
\left[
w_{\mathrm{DC}}r_{\mathrm{DC}}(c)
+
w_{\mathrm{B}}r_{\mathrm{B}}(c)
+
w_{\mathrm{Ord}}r_{\mathrm{Ord}}(c)
-
\operatorname{penalty}(c)
\right].
\]
A small penalty is applied to non-order Bernoulli-refined candidates:
\[
\operatorname{penalty}(c)
=
\lambda_{\rm pen}
\mathbf 1\{
c\text{ is a non-order Bernoulli-refined candidate}
\}.
\]
Unless otherwise stated, the default synthetic-benchmark selection
parameters are
\[
w_{\mathrm{DC}}=0.45,\qquad
w_{\mathrm{B}}=0.30,\qquad
w_{\mathrm{Ord}}=0.45,
\]
\[
\rho_{\rm gate}=0.90,\qquad
\gamma_{\rm gate}=0.965.
\]
The weights are used only as positive ranking coefficients in the secondary
selection score and are not required to sum to one.  
For non-order Bernoulli-refined candidates, the stricter thresholds
\[
\rho_{\rm gate}^{\rm strict}=0.96,
\qquad
\gamma_{\rm gate}^{\rm strict}=0.985
\]
are used, with \(\lambda_{\rm pen}=0.08\). The real-network diagnostics use
the same two-tier selection rule and the same rank weights, but a slightly
wider degree-corrected gate, as reported in
Section~\ref{subsec:real-networks}.

The final edge type of an observed edge \(i\to j\) is
\[
y_{ij}
=
(z_i^{\mathrm{out}},z_j^{\mathrm{in}}),
\]
and its integer encoding is
\[
\widehat y_{ij}
=
K_{\mathrm{in}}(z_i^{\mathrm{out}}-1)+z_j^{\mathrm{in}}.
\]

\begin{algorithm}[H]
\small
\DontPrintSemicolon
\KwIn{Directed adjacency matrix \(A\); number of roles \(K\); parameter grids}
\KwOut{Sender labels \(z^{\mathrm{out}}\), receiver labels \(z^{\mathrm{in}}\), edge types \(y\), encoded labels \(\widehat y\)}

Construct candidate views from count residuals, stationary-flow residuals,
degree-corrected residuals, and order scores\;

Generate initial candidates
\[
\mathcal Z^{(0)}
=
\{(z_r^{\mathrm{out},0},z_r^{\mathrm{in},0}):r=1,\ldots,R\}.
\]

Set \(\mathcal C\gets\emptyset\)\;

\ForEach{\((z^{\mathrm{out},0},z^{\mathrm{in},0})\in\mathcal Z^{(0)}\)}{
    Add the degree-corrected refined candidate to \(\mathcal C\)\;
    Add the Bernoulli-refined candidate to \(\mathcal C\)\;
    \If{the initialization is order-based}{
        Add the raw order candidate to \(\mathcal C\)\;
    }
}

\ForEach{\(c\in\mathcal C\)}{
    Compute \(S_{\mathrm{DC}}(c)\), \(S_{\mathrm{B}}(c)\), and \(S_{\mathrm{Ord}}(c)\)\;
}

Compute percentile ranks and the robust normalized score \(\nu_{\mathrm{DC}}\)\;

Form the eligible set \(\mathcal C_{\rm gate}\) using the degree-corrected gate\;

Select
\[
c^*
=
\arg\max_{c\in\mathcal C_{\rm gate}}
\left[
w_{\mathrm{DC}}r_{\mathrm{DC}}(c)
+
w_{\mathrm{B}}r_{\mathrm{B}}(c)
+
w_{\mathrm{Ord}}r_{\mathrm{Ord}}(c)
-
\operatorname{penalty}(c)
\right].
\]

Set \((z^{\mathrm{out}},z^{\mathrm{in}})\) to the labels of \(c^*\)\;

\ForEach{observed edge \(i\to j\)}{
    Set \(y_{ij}=(z_i^{\mathrm{out}},z_j^{\mathrm{in}})\) and
    \(\widehat y_{ij}=K_{\mathrm{in}}(z_i^{\mathrm{out}}-1)+z_j^{\mathrm{in}}\)\;
}

\Return \(z^{\mathrm{out}},z^{\mathrm{in}},y,\widehat y\)\;
\caption{\textsc{TT-SR}: Two-Tier Sender--Receiver role detection}
\label{alg:ttsr}
\end{algorithm}

\subsection{Computational complexity}

Let \(n\) be the number of vertices, \(m\) the number of directed edges,
\(K\) the number of roles, \(d\) the number of singular-vector components,
and \(R\) the number of initial candidates. For dense graphs, full SVD costs
\(O(n^3)\). For sparse graphs, truncated or randomized SVD can be applied
through residual matrix-vector products, giving an approximate cost
\[
O(T_{\rm svd}(m+n)d),
\]
where \(T_{\rm svd}\) is the number of Krylov or power iterations
\cite{HalkoMartinssonTropp2011}.

For each candidate, a straightforward full local search for degree-corrected
refinement costs
\[
O(T_{\mathrm{DC}}nK^2),
\]
where \(T_{\mathrm{DC}}\) is the number of refinement sweeps. Bernoulli
refinement has the same order of dependence on \(n\) and \(K\). The final
candidate-selection step is small compared with refinement and costs
\[
O(RK^2+R\log R).
\]

\section{Experiments}
\label{sec:experiments}

The experiments evaluate TT-SR as a node-role-constrained framework for
directed edge community detection. The synthetic benchmarks are organized
by the type of directed structure they are designed to test:
pathway-type communities, sender--receiver co-block communities, and
ordered-flow communities. This organization follows the modeling goal of
TT-SR: an edge community is not treated as an arbitrary cluster of edges,
but as an induced sender--receiver type
\[
i\to j
\quad\mapsto\quad
(z_i^{\mathrm{out}},z_j^{\mathrm{in}}).
\]
Thus, a successful method should recover edge-level communities while also
providing interpretable sender and receiver roles whenever such planted
roles exist.

\subsection{Experimental setup}
\label{subsec:experimental_setup}

\paragraph{Compared methods.}
We compare TT-SR with four baselines:
\[
\text{Bimod-Edge},\qquad
\text{Bimod-RP},\qquad
\text{DiSIM-RP},\qquad
\text{DScore-RP}.
\]
Bimod-Edge is the direct edge-clustering method based on the bimodularity
embedding of Cionca, Chan, and Van De Ville~\cite{Cionca2025Bimodularity}.
It clusters directed edges directly and therefore does not impose that edge
labels are induced by node-level sender and receiver roles. In the synthetic
experiments, we report an oracle-style version of this baseline by selecting
the best result over a small grid of singular-vector dimensions using the
planted edge labels; this makes the comparison conservative in favor of
Bimod-Edge.

The remaining three baselines output sender and receiver roles and then
assign edge labels by the same role-pair rule
\[
i\to j
\quad\mapsto\quad
(z_i^{\mathrm{out}},z_j^{\mathrm{in}}).
\]
Bimod-RP uses the left and right singular-vector embeddings of the directed
modularity matrix
\[
B_A
=
A-\frac{d^{\mathrm{out}}(d^{\mathrm{in}})^\top}{m}.
\]
DiSIM-RP uses a DiSIM-style normalized adjacency matrix
\[
D_{\mathrm{out}}^{-1/2}AD_{\mathrm{in}}^{-1/2}
\]
with the standard zero-inverse convention for zero degrees
\cite{RoheQinYu2016}. DScore-RP uses a D-SCORE-type ratio embedding based on
leading left and right singular-vector coordinates, with numerical
regularization for small denominators~\cite{WangLiangJi2020}. These
role-pair baselines test whether TT-SR improves beyond the role-pair
representation itself.

TT-SR is run with the same algorithmic components in all experiments:
multi-view candidate generation, degree-corrected and Bernoulli local
refinement, order-score candidates, and two-tier candidate selection. The
order-score view is included for all graph families, not only for
ordered-flow graphs. No benchmark-specific modification of the TT-SR
pipeline is made.

For reproducibility, we report the fixed implementation parameters used in
the synthetic benchmark study. For Bimod-Edge, the SVD dimensions are
\[
r\in\{2,3,4,6,8,10\}
\]
at Scale I and
\[
r\in\{2,3,4,5,6,8,10,12\}
\]
at Scales II and III. For TT-SR, the spectral dimensions are
\[
r\in\{2,3,4\},\qquad
r\in\{2,3,4,5\},\qquad
r\in\{2,3,4,5,6\}
\]
at Scales I, II, and III, respectively. The stationary-flow view uses
\[
\lambda\in\{0.02,0.10,0.30,0.50\},
\qquad
\alpha_{\rm flow}=0.05.
\]
The degree-corrected residual view uses
\[
\tau_{\rm out}=\tau_{\rm in}=\xi m/n,
\qquad
\xi\in\{0.25,0.50,1.00\}.
\]
All \(K\)-means steps use \(50\) random initializations. The
degree-corrected and Bernoulli refinements use \(12\) and \(8\) sweeps,
respectively, with Bernoulli smoothing parameter \(1.0\) and minimum group
size \(3\). The order-score view uses ridge parameter \(10^{-3}\), forward
tolerance \(0.10\), \(21\) segmentation grid points, and segmentation
minimum size \(8\). These choices are fixed across all synthetic benchmark
families and scales.

\paragraph{Evaluation protocol.}
All synthetic benchmarks use a three-scale protocol. At Scale I, II, and
III, the number of planted groups is \(K=4,5,6\), respectively. Each
reported value is averaged over \(10\) independent trials with different
random seeds. For methods that require the number of node roles, we use the planted value
of \(K\). For Bimod-Edge, the number of edge clusters is set to the number
of planted edge-label classes, including the background class when present.
In addition, Bimod-Edge is reported in an oracle-style form by selecting
the best result over its tested SVD-dimension grid using the planted edge
labels. In contrast, TT-SR selects its output only by the internal two-tier
criterion and does not use planted labels during candidate selection.

For each observed synthetic edge \(i\to j\), the planted edge label is
defined by the planted sender--receiver block pair of its source and target.
If this block pair belongs to the signal set of the corresponding generator,
the planted edge label is that block pair; otherwise the edge is assigned to
the background category. Edge-community recovery is evaluated by normalized
mutual information and adjusted Rand index,
\[
\mathrm{NMI}
\qquad\text{and}\qquad
\mathrm{ARI},
\]
which measure information agreement and chance-corrected pairwise agreement,
respectively~\cite{StrehlGhosh2002,VinhEppsBailey2010,HubertArabie1985}.
We also compute the Dice score after best matching between predicted and
planted edge classes~\cite{Dice1945}, but omit it from the main tables to
keep the presentation compact.

Since TT-SR also outputs node-level sender and receiver roles, we report
node-role recovery whenever planted sender and receiver labels are
available:
\[
\mathrm{ARI}_{\mathrm{out}}
=
\mathrm{ARI}(z^{\mathrm{out}},z_{\rm true}^{\mathrm{out}}),
\qquad
\mathrm{ARI}_{\mathrm{in}}
=
\mathrm{ARI}(z^{\mathrm{in}},z_{\rm true}^{\mathrm{in}}).
\]

\paragraph{Scale settings.}
The three scale settings are shown in Table~\ref{tab:scale_settings}. The
larger scales increase both the number of vertices and the number of planted
groups. To avoid making larger graphs artificially denser, edge
probabilities are scaled so that expected degrees remain roughly comparable
across scales.

\begin{table}[H]
\centering
\caption{Three scale settings used in the synthetic experiments.}
\label{tab:scale_settings}
\small
\setlength{\tabcolsep}{7pt}
\renewcommand{\arraystretch}{1.15}
\begin{tabular}{lccc}
\toprule
Scale & Number of groups \(K\) & Vertices per group \(n_{\rm per}\) & Total vertices \(n\) \\
\midrule
I   & \(4\) & \(50\) & \(200\) \\
II  & \(5\) & \(70\) & \(350\) \\
III & \(6\) & \(90\) & \(540\) \\
\bottomrule
\end{tabular}
\end{table}

Let \(K_0=4\) and \(n_{{\rm per},0}=50\) denote the Scale-I values. For
probabilities associated with a fixed source--target block pair, we use
\[
p(K,n_{\rm per})
=
p_0\frac{n_{{\rm per},0}}{n_{\rm per}}.
\]
For global noise probabilities, we use
\[
p_{\rm noise}(K,n_{\rm per})
=
p_{{\rm noise},0}
\frac{K_0n_{{\rm per},0}}{Kn_{\rm per}}.
\]
For ordered-flow graphs, forward and backward probabilities are additionally
scaled by
\[
\frac{K_0-1}{K-1},
\]
because each group may connect to several later or earlier groups. Thus,
\[
p_{\rm forward}(K,n_{\rm per})
=
p_{{\rm forward},0}
\frac{n_{{\rm per},0}}{n_{\rm per}}
\frac{K_0-1}{K-1},
\]
and the same rule is used for \(p_{\rm backward}\).

\paragraph{Synthetic benchmark families.}
The benchmark suite contains five representative models grouped into three
directed mechanisms. The Scale-I parameters are summarized in
Table~\ref{tab:representative_settings}; Scale-II and Scale-III parameters
are obtained from the scaling rules above.

\begin{table}[H]
\centering
\caption{Synthetic benchmark families and representative Scale-I settings.}
\label{tab:representative_settings}
\scriptsize
\setlength{\tabcolsep}{3pt}
\renewcommand{\arraystretch}{1.18}
\begin{tabularx}{\textwidth}{l>{\RaggedRight\arraybackslash}X>{\RaggedRight\arraybackslash}X}
\toprule
Model & Scale-I parameters & Main directed signal \\
\midrule
Block-cycle sparse
&
\(p_{\rm self}=0.12,\ p_{\rm con}=0.12,\ p_{\rm noise}=0.005\)
&
Sparse cyclic source--target pathways \\

Oriented blockmodel
&
\(p_{\rm self}=0.18,\ p_{\rm between}=0.22,\ \rho_{\rm forward}=0.90,\ p_{\rm noise}=0.004\)
&
Within-block density with preferred between-block direction \\

Stochastic co-block
&
\(p_{\rm high}=0.22,\ p_{\rm low}=0.006,\ \eta=0.35\)
&
Reusable sender and receiver roles \\

Degree-corrected co-block
&
\(p_{\rm high}=0.18,\ p_{\rm low}=0.004,\ \sigma_{\rm out}=\sigma_{\rm in}=0.8,\ \eta=0.35\)
&
Sender--receiver roles with heterogeneous out- and in-activity \\

Ordered forward-dominant
&
\(p_{\rm within}=0.14,\ p_{\rm forward}=0.22,\ p_{\rm backward}=0.001,\ p_{\rm noise}=0.001\)
&
Global forward flow across ordered groups \\
\bottomrule
\end{tabularx}
\end{table}

The block-cycle model generates edges inside each group with probability
\(p_{\rm self}\), from group \(a\) to group \(a+1\) modulo \(K\) with
probability \(p_{\rm con}\), and between all other block pairs with
probability \(p_{\rm noise}\). The oriented blockmodel generates
within-group edges with probability \(p_{\rm self}\), while between-group
edges are generated with probability \(p_{\rm between}\) and oriented in the
preferred direction with probability \(\rho_{\rm forward}\).

In the stochastic co-block model, each vertex has planted sender and
receiver roles. With probability \(\eta\), the receiver role is reassigned
uniformly among the other \(K-1\) roles. Unless otherwise stated, the active
role pairs are
\[
\mathcal A_{\rm co}=\{(a,a):1\le a\le K\}.
\]
Thus, for an edge \(i\to j\) with
\[
(z_i^{\out},z_j^{\inn})=(a,b),
\]
the Bernoulli probability is \(p_{\rm high}\) if
\((a,b)\in\mathcal A_{\rm co}\), and \(p_{\rm low}\) otherwise. The
degree-corrected version further multiplies this block probability by
independent outgoing and incoming log-normal propensities, with standard
deviations \(\sigma_{\rm out}\) and \(\sigma_{\rm in}\), and clips the
resulting probabilities to \([0,1]\).

In the ordered directed SBM, for a vertex in group \(a\) and a vertex in
group \(b\), the signal probability is \(p_{\rm within}\) if \(a=b\),
\(p_{\rm forward}\) if \(a<b\), and \(p_{\rm backward}\) if \(a>b\). A small
uniform noise probability is added to all admissible block pairs:
\[
p_{ab}
=
p_{\rm noise}
+
\begin{cases}
p_{\rm within}, & a=b,\\
p_{\rm forward}, & a<b,\\
p_{\rm backward}, & a>b.
\end{cases}
\]
All probabilities are chosen so that \(p_{ab}\le 1\).

\subsection{Synthetic benchmark study}
\label{subsec:synthetic_benchmarks}

We evaluate TT-SR on the five synthetic benchmarks summarized in
Table~\ref{tab:representative_settings}, using the three scale settings and
degree-controlled scaling rules described in
Section~\ref{subsec:experimental_setup}. The benchmarks cover three
complementary directed mechanisms: pathway-type communities, co-block
communities, and ordered-flow communities. For each configuration, we report
the mean performance over \(10\) independent trials. Edge-community
recovery is measured by normalized mutual information (NMI) and adjusted
Rand index (ARI).

\begin{table}[H]
\centering
\caption{Synthetic benchmark results. Each entry reports mean NMI/ARI over
\(10\) independent trials. Standard deviations are omitted from the main
table to keep the presentation compact. The abbreviations are BE =
Bimod-Edge, BRP = Bimod-RP, DiRP = DiSIM-RP, and DSRP = DScore-RP.}
\label{tab:synthetic_all_results}
\small
\setlength{\tabcolsep}{3pt}
\renewcommand{\arraystretch}{1.12}
\begin{tabular}{llccccc}
\toprule
Model & Scale & BE & BRP & DiRP & DSRP & TT-SR \\
\midrule
Cycle & I
& 0.9076/0.9118 & 0.6484/0.5480 & 0.6010/0.5006 & 0.6828/0.6104 & \textbf{0.9690/0.9723} \\
Cycle & II
& 0.9100/0.9056 & 0.7311/0.6462 & 0.7046/0.6189 & 0.7424/0.6684 & \textbf{0.9751/0.9770} \\
Cycle & III
& 0.8598/0.8185 & 0.7745/0.6868 & 0.8276/0.7793 & 0.8328/0.7869 & \textbf{0.9783/0.9801} \\
\midrule
Oriented & I
& 0.9735/0.9772 & 0.9379/0.9411 & 0.9368/0.9353 & 0.9469/0.9482 & \textbf{0.9974/0.9976} \\
Oriented & II
& 0.9786/0.9817 & 0.9230/0.9186 & 0.9307/0.9279 & 0.9321/0.9301 & \textbf{0.9974/0.9980} \\
Oriented & III
& 0.9788/0.9813 & 0.9238/0.9162 & 0.9325/0.9262 & 0.9321/0.9252 & \textbf{0.9979/0.9983} \\
\midrule
Co-block & I
& \textbf{1.0000/1.0000} & 0.6776/0.4801 & 0.5069/0.3619 & 0.8858/0.8184 & \textbf{1.0000/1.0000} \\
Co-block & II
& \textbf{1.0000/1.0000} & 0.9612/0.9598 & 0.9473/0.9441 & 0.9652/0.9644 & 0.9998/0.9998 \\
Co-block & III
& \textbf{0.9996/0.9996} & 0.9971/0.9971 & 0.9904/0.9899 & 0.9983/0.9982 & 0.9993/0.9993 \\
\midrule
DC co-block & I
& 0.7315/0.5904 & 0.4719/0.2797 & 0.7584/0.6339 & 0.7877/0.6560 & \textbf{0.9832/0.9858} \\
DC co-block & II
& 0.6410/0.4349 & 0.6940/0.4827 & 0.7124/0.5525 & 0.9024/0.8837 & \textbf{0.9800/0.9817} \\
DC co-block & III
& 0.6355/0.3960 & 0.6222/0.3930 & 0.8766/0.8156 & 0.9357/0.8822 & \textbf{0.9810/0.9827} \\
\midrule
Ordered & I
& 0.3337/0.1767 & 0.3133/0.1406 & 0.4547/0.2996 & 0.3760/0.2222 & \textbf{0.9124/0.9001} \\
Ordered & II
& 0.3218/0.1356 & 0.2732/0.1067 & 0.4496/0.2401 & 0.3528/0.1572 & \textbf{0.9491/0.9405} \\
Ordered & III
& 0.3043/0.1074 & 0.2585/0.0905 & 0.4687/0.2369 & 0.3466/0.1421 & \textbf{0.9615/0.9506} \\
\bottomrule
\end{tabular}
\end{table}

Table~\ref{tab:synthetic_all_results} shows that TT-SR is consistently
strong across the synthetic benchmark suite. On pathway-type graphs, TT-SR
achieves the best NMI and ARI on both sparse block-cycle and oriented
blockmodel benchmarks. The advantage is especially visible on sparse
block-cycle graphs at Scale III, where Bimod-Edge degrades while TT-SR
remains close to perfect recovery.

On stochastic co-block graphs, both Bimod-Edge and TT-SR achieve almost
perfect recovery. This confirms that the role-pair constraint of TT-SR does
not reduce edge-level accuracy in easy co-block settings. The
degree-corrected co-block benchmark is more discriminative. In this setting,
TT-SR substantially outperforms all baselines at every scale under both NMI
and ARI, even when DScore-RP becomes stronger at larger scales. This shows
the value of degree-corrected residuals, local sender--receiver refinement,
and two-tier candidate selection under heterogeneous outgoing and incoming
activity.

On ordered-flow graphs, TT-SR is much more accurate than all compared
baselines. This indicates that the order-score candidates provide useful
information when the dominant structure is a global directed flow rather
than an assortative or local co-block pattern.

For the co-block benchmarks, we also report the node-level sender and
receiver role recovery of TT-SR. This diagnostic is important because TT-SR
does not merely cluster edges; it explains edge communities through
reusable node-level sender and receiver roles.

\begin{table}[H]
\centering
\caption{Node-role recovery of TT-SR on co-block benchmarks. Mean sender
and receiver ARI over \(10\) independent trials.}
\label{tab:synthetic_role_recovery}
\small
\setlength{\tabcolsep}{5pt}
\renewcommand{\arraystretch}{1.12}
\begin{tabular}{llcc}
\toprule
Model & Scale & Sender ARI & Receiver ARI \\
\midrule
Co-block & I   & 1.0000 & 1.0000 \\
Co-block & II  & 1.0000 & 0.9993 \\
Co-block & III & 1.0000 & 0.9987 \\
\midrule
DC co-block & I   & 0.9499 & 0.9221 \\
DC co-block & II  & 0.9260 & 0.9195 \\
DC co-block & III & 0.9258 & 0.9319 \\
\bottomrule
\end{tabular}
\end{table}

Table~\ref{tab:synthetic_role_recovery} shows that the edge-level
performance of TT-SR is supported by meaningful node-level roles. In the
stochastic co-block model, sender and receiver roles are recovered almost
exactly. In the degree-corrected co-block model, role recovery remains high
despite substantial degree heterogeneity, supporting the use of
degree-corrected sender--receiver scoring.

\subsection{Real directed networks}
\label{subsec:real-networks}

We further evaluate the methods on real directed networks whose edges have
explicit source--target semantics. In these networks, an edge \(i\to j\)
does not merely indicate that two vertices are similar; it represents an
asymmetric relation such as sending an email, voting for a candidate, rating
or trusting another user, flying from an origin airport to a destination
airport, or linking from one article to another. Such data are therefore
natural testbeds for sender--receiver community detection.

We use the same five methods as in the synthetic experiments: Bimod-Edge,
Bimod-RP, DiSIM-RP, DScore-RP, and TT-SR. Bimod-Edge represents
unrestricted edge clustering, whereas the other four methods output
role-pair edge labels induced by sender and receiver node roles. Thus, the
real-network comparison again contrasts direct edge clustering with
node-role-constrained sender--receiver clustering.

\paragraph{Bicommunity scoring.}
For a sending set \(S\) and a receiving set \(T\), we use the
bimodularity-style score
\[
q(S,T)
=
\frac{1}{m}
\sum_{i\in S,\ j\in T}
\left(
A_{ij}
-
\frac{d_i^{\mathrm{out}}d_j^{\mathrm{in}}}{m}
\right),
\qquad
m=\sum_{i,j}A_{ij}.
\]
This score measures whether the directed flow from \(S\) to \(T\) is larger
than expected under the directed configuration null model. For a role-pair
method, each pair \((a,b)\) induces a bicommunity
\[
S_a=\{i:z_i^{\mathrm{out}}=a\},
\qquad
T_b=\{j:z_j^{\mathrm{in}}=b\}.
\]
For Bimod-Edge, each edge cluster induces a sending set consisting of the
sources of edges in that cluster and a receiving set consisting of the
targets.

Following the bimodularity interpretation, we rank all detected
bicommunities by their individual scores and report
\[
Q_{\rm bi}^{\max}
=
\max_\ell q_\ell,
\qquad
Q_{\rm bi}^{\rm top3}
=
q_{(1)}+q_{(2)}+q_{(3)},
\]
where \(q_{(1)}\ge q_{(2)}\ge q_{(3)}\) are the three largest individual
bicommunity scores. These quantities are used as structural diagnostics of
the strongest detected sender--receiver pathways. They are not ground-truth
accuracy measures.

\paragraph{Metadata agreement on Email-Eu-core.}
Email-Eu-core~\cite{SNAP,ParanjapeBensonLeskovec2017} is the only real
network in our collection with external node metadata. Each vertex belongs
to one of \(42\) departments. Node labels are
not used during fitting. For evaluation, each observed edge \(i\to j\) is
assigned the ordered metadata label
\[
y_{ij}^{\rm meta}=(\ell_i,\ell_j),
\]
where \(\ell_i\) is the department label of vertex \(i\). For role-pair
methods, the predicted edge label is
\[
\widehat y_{ij}
=
(z_i^{\mathrm{out}},z_j^{\mathrm{in}}).
\]
For Bimod-Edge, the predicted edge label is the edge-cluster label. We
report edge-level NMI and ARI between \(y_{ij}^{\rm meta}\) and
\(\widehat y_{ij}\). For role-pair methods, we also report the better of
the sender-side and receiver-side node NMI against the department labels.

Table~\ref{tab:real_email_eu_metadata} reports the best metadata agreement
over \(K=2,\ldots,12\). This is an oracle diagnostic of the best metadata
alignment achievable within the tested role budgets, not an unsupervised
model-selection result.

\begin{table}[H]
\centering
\caption{Metadata agreement on Email-Eu-core. The reported values are oracle
diagnostics over \(K=2,\ldots,12\). Edge NMI and Edge ARI compare predicted
edge communities with ordered department pairs \((\ell_i,\ell_j)\). Node
NMI is reported only for methods that output reusable sender and receiver
node roles.}
\label{tab:real_email_eu_metadata}
\scriptsize
\setlength{\tabcolsep}{4pt}
\renewcommand{\arraystretch}{1.15}
\begin{tabular}{lcccc}
\toprule
Method & Selected \(K\) & Edge NMI & Edge ARI & Node NMI \\
\midrule
Bimod-Edge & 12 & 0.6006 & 0.1905 & -- \\
Bimod-RP   & 12 & 0.5751 & 0.2210 & 0.3672 \\
DiSIM-RP   & 12 & 0.5386 & 0.2276 & 0.4101 \\
DScore-RP  & 6  & 0.4646 & 0.1562 & 0.3552 \\
TT-SR      & 12 & \textbf{0.6656} & \textbf{0.3849} & \textbf{0.5861} \\
\bottomrule
\end{tabular}
\end{table}

TT-SR achieves the highest edge-level NMI, edge-level ARI, and node-level
NMI on Email-Eu-core. This indicates that the sender--receiver role
constraint is not merely an interpretability device: on this labeled
network, it also yields edge communities that align well with external
metadata.

\paragraph{Leading bicommunities on unlabeled networks.}
For the remaining networks, no ground-truth sender--receiver edge labels
are available. We therefore compare the strength of the leading detected
bicommunities. For each method, we select the best value of
\(Q_{\rm bi}^{\rm top3}\) over the tested \(K\)-grid. For the unlabeled
networks, we use \(K\in\{4,6,8\}\); for Bimod-Edge this corresponds to
\(K^2\) edge clusters, matching the number of possible sender--receiver
role pairs.

Table~\ref{tab:real_unlabeled_combined} combines the basic network
statistics and the leading-bicommunity results. Self-loops are removed, and
repeated directed interactions between the same ordered pair are aggregated
as edge weights. The column \(m_{\rm nz}\) denotes the number of distinct
directed edges after aggregation.

\begin{table}[H]
\centering
\caption{Leading sender--receiver bicommunities on unlabeled real directed
networks. The reported values are the best \(Q_{\rm bi}^{\rm top3}\) over
the tested \(K\)-grid. The column ``Best RP'' is the best value among
Bimod-RP, DiSIM-RP, and DScore-RP.}
\label{tab:real_unlabeled_combined}
\scriptsize
\setlength{\tabcolsep}{5pt}
\renewcommand{\arraystretch}{1.15}
\begin{tabular}{lrrccc}
\toprule
Network & \(n\) & \(m_{\rm nz}\) & Bimod-Edge & Best RP & TT-SR \\
\midrule
CollegeMsg~\cite{SNAP,OpsahlPanzarasa2009,ParanjapeBensonLeskovec2017}
& 1899 & 20296 & 0.1487 & 0.0483 & \textbf{0.3482} \\

Radoslaw-email~\cite{RossiAhmed2015}
& 167 & 5783 & 0.1156 & 0.2393 & \textbf{0.2455} \\

Email-DNC~\cite{RossiAhmed2015}
& 1866 & 5517 & 0.1422 & 0.3336 & \textbf{0.4137} \\

Wiki-Vote~\cite{SNAP,LeskovecHuttenlocherKleinberg2010}
& 7115 & 103689 & 0.2170 & 0.3013 & \textbf{0.3588} \\

Wiki-RfA~\cite{SNAP,LeskovecHuttenlocherKleinberg2010}
& 10493 & 174984 & 0.2093 & 0.2900 & \textbf{0.3885} \\

Bitcoin-Alpha~\cite{SNAP,KumarSpezzanoSubrahmanianFaloutsos2016}
& 3783 & 24186 & 0.1325 & 0.0590 & \textbf{0.3258} \\

Bitcoin-OTC~\cite{SNAP,KumarSpezzanoSubrahmanianFaloutsos2016}
& 5881 & 35592 & 0.1481 & 0.1317 & \textbf{0.3568} \\

Advogato~\cite{RossiAhmed2015}
& 5167 & 47322 & 0.1262 & 0.1048 & \textbf{0.3235} \\

OpenFlights~\cite{OpenFlights}
& 3425 & 37594 & 0.2172 & 0.2147 & \textbf{0.5110} \\

Wikispeedia~\cite{SNAP,WestPineauPrecup2009}
& 4592 & 119772 & 0.0961 & 0.1754 & \textbf{0.3007} \\
\bottomrule
\end{tabular}
\end{table}

Together with Email-Eu-core, the real-data study covers eleven directed
networks: one labeled network for metadata agreement and ten unlabeled
networks for leading-bicommunity diagnostics. TT-SR obtains the largest
\(Q_{\rm bi}^{\rm top3}\) on all ten unlabeled networks. Since these
networks do not have ground-truth sender--receiver edge labels, this result
should be interpreted as structural evidence rather than as an accuracy
comparison. Moreover, because \(Q_{\rm bi}^{\rm top3}\) is itself a
residual-based structural diagnostic, the unlabeled-network results should
not be read as independent ground-truth validation. They indicate that
TT-SR extracts a small number of strong degree-corrected
sender--receiver pathways across communication, voting, trust, route, and
hyperlink networks.

The real-network results are therefore complementary to the synthetic
recovery experiments. On Email-Eu-core, where metadata are available, TT-SR
gives the best oracle metadata agreement within the tested role budgets. On
the unlabeled networks, TT-SR gives the strongest leading-bicommunity
summaries under a common structural diagnostic.

For the real-network diagnostics, Email-Eu-core is evaluated over
\[
K\in\{2,3,\ldots,12\},
\]
while the unlabeled networks are evaluated over
\[
K\in\{4,6,8\}.
\]
The TT-SR candidate grids for real networks are
\[
r\in\{2,3,4\},\qquad
\lambda\in\{0.05,0.30\},\qquad
\xi\in\{0.50,1.00\},
\]
where
\[
\tau_{\rm out}=\tau_{\rm in}=\xi m/n.
\]
The real-network runs use \(20\) \(K\)-means initializations, \(8\)
degree-corrected refinement sweeps, \(5\) Bernoulli refinement sweeps,
Bernoulli smoothing parameter \(1.0\), and minimum group size \(3\). The
stationary-flow computation uses \(\alpha_{\rm flow}=0.05\), maximum
iteration number \(1000\), and tolerance \(10^{-10}\). The order-score view
uses ridge parameter \(10^{-3}\) and forward tolerance \(0.10\).

\paragraph{Parameter-use note.}
The same two-tier selection rule is used in all experiments, but the
candidate-grid size and gate width are chosen separately for synthetic
benchmarks and real-network diagnostics. This distinction is made because
the real networks are larger, more heterogeneous, and do not provide planted
edge labels.

For the synthetic benchmarks, we use
\[
\rho_{\rm gate}=0.90,\qquad
\gamma_{\rm gate}=0.965.
\]
For non-order Bernoulli-refined candidates, the stricter thresholds are
\[
\rho_{\rm gate}^{\rm strict}=0.96,\qquad
\gamma_{\rm gate}^{\rm strict}=0.985.
\]
For the real-network diagnostics, we use the slightly wider gate
\[
\rho_{\rm gate}=0.85,\qquad
\gamma_{\rm gate}=0.94,
\]
with stricter non-order Bernoulli thresholds
\[
\rho_{\rm gate}^{\rm strict}=0.92,\qquad
\gamma_{\rm gate}^{\rm strict}=0.97.
\]
The rank weights are kept fixed in both settings:
\[
w_{\rm DC}=0.45,\qquad
w_{\rm B}=0.30,\qquad
w_{\rm Ord}=0.45,
\]
and the non-order Bernoulli penalty is \(\lambda_{\rm pen}=0.08\).
Thus, the selection mechanism is unchanged, while the gate width is adapted
only between the synthetic recovery setting and the real-network diagnostic
setting, not separately for individual graph families.

\subsection{Summary of experimental results}
\label{subsec:experiment_summary}

The synthetic and real-network experiments provide complementary evidence
for the proposed sender--receiver framework. The synthetic benchmark suite
shows that TT-SR remains accurate across three different directed
mechanisms: pathway-type communities, co-block communities, and ordered-flow
communities. In pathway-type graphs, TT-SR improves over direct
bimodularity edge clustering on sparse block-cycle graphs and remains
nearly perfect on oriented blockmodels. In stochastic co-block graphs,
TT-SR is essentially tied with Bimod-Edge, showing that the role-pair
constraint does not reduce edge-level accuracy in easy co-block settings.
In degree-corrected co-block graphs, TT-SR is substantially more robust
than the baselines, which supports the use of degree-corrected residuals
and local sender--receiver refinement. In ordered-flow graphs, TT-SR clearly
outperforms all tested baselines, showing the usefulness of the order-score
view when the dominant signal is global directed flow.

The real-network experiments support the same conclusion from a different
angle. On Email-Eu-core, TT-SR achieves the strongest alignment with
external department metadata at both the edge and node-role levels. On the
unlabeled networks, where no ground-truth sender--receiver labels are
available, TT-SR consistently extracts higher-scoring leading bicommunities under
the same bimodularity-style diagnostic. Taken together, the experiments
show that TT-SR combines the expressiveness of directed edge clustering
with the interpretability of reusable sender and receiver node roles.

\FloatBarrier

\section{Discussion and Conclusion}

This paper introduced TT-SR, a Two-Tier Sender--Receiver framework for directed edge community detection. 
The central modeling idea is that a directed edge community should not be treated merely as an arbitrary cluster of edges. 
Instead, it can be explained by reusable sender and receiver roles of vertices: each vertex \(i\) has a sender role \(z_i^{\out}\) and a receiver role \(z_i^{\inn}\), and each observed edge \(i\to j\) receives the induced type \((z_i^{\out},z_j^{\inn})\). 
This places TT-SR between one-label vertex clustering and unrestricted edge clustering: it is more expressive than ordinary node clustering, but more interpretable than free edge clustering.

Theoretically, TT-SR is motivated by the stationary-distribution modularity for directed graphs. 
The residual term \(\phi_uP_{uv}-\phi_u\phi_v\) is already directed, but the usual compatibility condition remains one-role. 
TT-SR replaces this one-label condition by a sender--receiver interaction, leading to a two-role modularity viewpoint and a full role-pair representation for directed edge communities. 
The degree-corrected profile score provides the likelihood-based counterpart of this residual viewpoint: its unsmoothed form is a profile likelihood-ratio statistic, and locally it is equivalent to a degree-normalized quadratic sender--receiver residual energy. 
Equivalently, this quadratic energy is the square of the strongest normalized linear sender--receiver residual contrast. 
Algorithmically, TT-SR combines several directed views---count residuals, stationary-flow residuals, degree-corrected residuals, and order-score segmentations---and refines the resulting candidates under common sender--receiver objectives. 
The final two-tier selection rule uses the degree-corrected profile score to define a primary support set, while Bernoulli density and order-flow scores serve only as secondary signals inside this support set.

Across synthetic benchmarks, TT-SR provides accurate edge-level recovery while preserving an interpretable sender--receiver node-role explanation. 
The real-network experiments further show that this representation is useful beyond planted models: it aligns well with Email-Eu-core metadata and extracts prominent directed bicommunity summaries on unlabeled networks.

Several limitations remain. 
First, the synthetic experiments use the planted number of sender and
receiver roles, and the Email-Eu-core metadata scores are oracle diagnostics
over a tested \(K\)-grid rather than the result of a fully unsupervised
model-selection procedure. 
A full model-selection extension should combine the profile score with a
complexity penalty such as BIC \cite{Schwarz1978}, MDL \cite{Rissanen1978},
or held-out likelihood. 
Second, only Email-Eu-core provides external metadata for quantitative
real-network validation; the remaining real-network results are structural
pathway summaries rather than ground-truth accuracy results. 
Third, TT-SR contains several interacting components, including multiple
spectral views, local refinement, and two-tier candidate selection. 
The present experiments evaluate the complete framework against closely
related baselines; a larger-scale component-wise sensitivity study is left
for future work. 
Finally, the current implementation uses several SVD-based initializations
and focuses on static graphs; scalable matrix-free implementations and
dynamic sender--receiver extensions are natural directions for future work.

Together, these results support the sender--receiver role-pair view as a useful and interpretable middle ground between one-label vertex clustering and unrestricted directed edge clustering.

\section*{Declarations}

\subsection*{Competing interests}
The author declares that he has no competing interests.

\subsection*{Funding}
No specific funding was received for the work reported in this manuscript.

\subsection*{Author contributions}
Duy Hieu Do conceived the study, developed the proposed framework and algorithm, designed and implemented the experiments, analyzed the results, and wrote the manuscript.

\subsection*{Data availability statement}
The real-world datasets used in this study are publicly available from the sources cited in the manuscript. The synthetic datasets were generated according to the models and parameter settings described in the manuscript. Additional data supporting the findings of this study are available from the corresponding author upon reasonable request.

\end{document}